\documentclass[conference, 10pt]{IEEEtran}
\IEEEoverridecommandlockouts
\usepackage{cite}
\usepackage{amsmath,amssymb,amsfonts}
\usepackage{graphicx}
\usepackage{textcomp}
\usepackage{xcolor}
\usepackage{subcaption}
\usepackage{bm}
\usepackage{multirow}
\usepackage{booktabs}
\usepackage{subcaption}
\usepackage{amsthm}\usepackage{balance}\usepackage{url}
\usepackage{algorithm,algpseudocode}
\usepackage{amssymb}
\usepackage{nccmath}
\usepackage{makecell}

\newtheorem{theorem}{Theorem}
\newtheorem*{proof*}{Proof}
\newtheorem{proposition}{Proposition}

\algrenewcommand\algorithmicrequire{\textbf{Initialization:}}
\algrenewcommand\algorithmicensure{\textbf{Output:}}
\algdef{SE}[DOWHILE]{Do}{doWhile}{\algorithmicdo}[1]{\algorithmicwhile\ #1}%

\makeatletter

\newcommand{\Rmnum}[1]{\expandafter\@slowromancap\romannumeral #1@}
\makeatother

\date{}

\def\BibTeX{{\rm B\kern-.05em{\sc i\kern-.025em b}\kern-.08em
    T\kern-.1667em\lower.7ex\hbox{E}\kern-.125emX}}

\DeclareMathSizes{10}{9}{7}{5}

\begin{document}
\title{\huge{Distortion-Aware UAV Placement for Aerial Semantic Relay Communications: An Analytical Approach}
}

\author{Mingze Gong\textsuperscript{1}, Jia Yan\textsuperscript{1}, and Shuoyao Wang\textsuperscript{2}   \\
    \textsuperscript{1} Intelligent Transportation Thrust, The Hong Kong University of Science and Technology (Guangzhou), China\\
    \textsuperscript{2} College of Electronic and Information Engineering, Shenzhen University, China\\
    E-mail: mgong256@connect.hkust-gz.edu.cn, jasonjiayan@hkust-gz.edu.cn, sywang@szu.edu.cn
    }
\maketitle

\begin{abstract}\label{abstract}
    Aerial semantic relay communications (SRC) employs an unmanned aerial vehicle (UAV) equipped with a semantic encoder as a relay, which not only extends the data acquisition coverage of the base station (BS) from resource-limited sensing device (SD) but also enhances communication efficiency through semantic feature transmission over the UAV-BS link.
    Existing works mainly focus on sum-rate maximization, overlooking the end-to-end reconstruction distortion of sensory data in UAV-assisted SRC systems.
    Optimizing the UAV placement is crucial for minimizing the end-to-end reconstruction distortion, as it fundamentally trades off the input perturbation at the UAV-side encoder against that at the BS-side decoder through the two-hop wireless channel conditions. In this paper, we propose an interpretable and efficient UAV placement policy by minimizing end-to-end reconstruction distortion in aerial SRC.
    This is a challenging task since the black-box nature of the DNN-based codecs and the intricate coupling between the heterogeneous codec sensitivities, along with two-hop channel impairments, render the end-to-end distortion analytically intractable to characterize.
    We first derive an analytical expression of the end-to-end distortion, explicitly revealing the impact of cross-hop perturbation coupling, wireless channel and radio resource on the reconstruction error. 
    Based on that, we develop a closed-form UAV placement strategy with fast adaptability across various aerial SRC system configurations.
    Numerical results demonstrate that the proposed distortion-aware UAV deployment closely tracks the empirical exhaustive-search optimum, while achieving lower distortion compared to representative capacity-based and curve-fitting benchmarks.
\end{abstract}
\begin{IEEEkeywords}
    Semantic communication, distortion analysis, UAV placement, relay networks.
\end{IEEEkeywords}

\section{Introduction}\label{introduction}
    Driven by deep neural networks (DNNs) and joint source-channel coding (JSCC), semantic communication has shifted from bit-level transmission to semantic-level fidelity, achieving remarkable communication efficiency~\cite{9955525, 8723589}.
    However, resource-constrained sensing devices (SDs) often cannot perform DNN-based semantic encoding when transmitting sensory data to a remote base station (BS).
    To address this limitation, aerial semantic relay communication (SRC)~\cite{10638143, 10560514} offloads semantic encoding to an unmanned aerial vehicle (UAV), which extracts semantic features from raw sensory data and relays them to the BS-side decoder for reconstruction.
    This framework extends data acquisition coverage via cooperative communication~\cite{9298816} and improves spectrum efficiency through the UAV-BS semantic link.

    The position of the UAV needs to be carefully determined to improve the performance of the aerial SRC system. 
    Specifically, \cite{11224400} studied a secure aerial SRC system to maximize the secrecy semantic data rate while minimizing delay and energy consumption.
    In addition, \cite{yin2025aerial} considered the aerial semantic relay-enabled space-air-ground integrated networks, where the transmit power, bandwidth, and UAV positions are jointly optimized by maximizing the system's sum-rate.

    However, existing works~\cite{10638143, 11224400, yin2025aerial} mainly focus on sum-rate maximization, while overlooking the end-to-end reconstruction distortion of sensory data in UAV-assisted SRC systems.
    Specifically, the raw sensory data received at the UAV from the SD is first corrupted by the wireless channel fading and communication noise of  the SD-UAV link before being fed into the onboard semantic encoder. The resulting semantic features output by
the encoder are then transmitted to the BS over an imperfect
UAV-BS wireless link, incurring further perturbation.
    Finally, the perturbed semantic features received at the BS are input to the semantic decoder to reconstruct the original sensory
data. 
    Such end-to-end reconstruction error critically depends on the UAV placement. 
    In particular, when the UAV hovers
closer to the SD, one can attain a less-perturbed input to the semantic encoder at the cost of a degraded UAV-BS channel,
which increases the perturbation of the semantic features input
to the BS-side decoder. Conversely, when the UAV moves
closer to the BS, the decoder receives higher-quality semantic
features, but the increased SD-UAV distance amplifies the
perturbation at the encoder input.
    This fundamental trade-off necessitates a judicious UAV placement to minimize the end-to-end reconstruction distortion.


    Tackling this task is challenging because the end-to-end distortion in aerial SRC systems is analytically intractable to characterize.
    First, the highly nonlinear mappings learned by the DNN-based codecs operate as black boxes, making it inherently difficult to derive closed-form expressions for their sensitivity to input perturbations. 
    Second, the encoder and decoder respond differently to their respective input perturbations \cite{10328181}, and these heterogeneous sensitivities are intricately intertwined in shaping the overall reconstruction error. 
    More specifically, the decoder input is affected not only by the UAV-BS channel, but also by the encoder-transformed source perturbation induced by the SD-UAV link.
    This cascaded interaction between wireless channels and codec sensitivities makes distortion analysis highly nontrivial.

    Notice that recent works~\cite{10189867, zhang2026unanticipated} have analyzed end-to-end distortion under single-hop perturbations of semantic features, while ignoring the intricate coupling between the input perturbations at both the UAV-side encoder and the BS-side decoder. 
    This oversight renders their distortion analysis inapplicable to the considered aerial SRC system.
    Alternatively, one can resort to curve fitting based on empirical measurements to model the end-to-end distortion \cite{11006481}.
    However, beyond offering limited engineering insights, such data-driven approaches typically require re-fitting with newly collected samples whenever the system configuration changes, which can be prohibitively time-consuming in dynamic network environments.
    In a nutshell, \emph{an analytical framework for end-to-end distortion characterization under coupled two-hop perturbations is urgently needed to guide efficient UAV placement in aerial SRC systems.}

    In this paper, we aim to design interpretable and efficient UAV placement policy that minimizes end-to-end reconstruction distortion in aerial SRC systems.
    The main contributions are summarized as follows:
    \begin{itemize}
        \item \emph{End-to-End Distortion Analysis under Two-Hop Perturbations:}  
        We develop an analytical framework to characterize how source perturbations propagate through the semantic encoder, couple with semantic perturbations, and jointly affect end-to-end distortion through the semantic decoder.
        The derived expression decomposes distortion into source-domain, semantic-domain, and cross-hop coupling components, revealing heterogeneous codec sensitivities and explaining the limitation of capacity-based placement for DNN-based semantic relaying.
        \item \emph{Distortion-Aware UAV Placement Strategy:}
        Based on the distortion analysis, we derive a closed-form UAV placement policy for aerial SRC systems, explicitly connecting the UAV placement with device transmit power, geometry coordinates, and DNN sensitivity factors.  
        Therefore, the proposed method provides not only a low-complexity placement solution, but also an interpretable guideline for how the UAV should shift toward the SD or the BS under different source- and semantic-link sensitivities.
        \item \emph{Numerical Validation:}  
        Numerical results validate the accuracy of the proposed closed-form solution and show that it closely tracks the empirical exhaustive-search optimum.
        Compared with capacity-based and curve-fitting schemes, the proposed method determines UAV positions with lower distortion.
    \end{itemize}


\section{System Model and Problem Formulation}\label{system_model}

\subsection{System Model}
    As illustrated in Fig.~\ref{fig:system_model}, we consider an aerial SRC system where the resource-constrained SD transmits the sensory data (e.g., environmental images) to the BS with the assistance of the UAV. 
    Besides extending the data acquisition coverage of the BS via cooperative transmission \cite{7572068}, the UAV acts as an edge server to offer semantic encoding services for the SD with restricted computing and storage capacities, thereby enhancing spectrum efficiency through the UAV-BS semantic link. 
    Specifically, the well-trained JSCC encoder and decoder are deployed at the UAV and the BS, respectively.

    \begin{figure}[!t]
        \centering
        \includegraphics[scale=0.65]{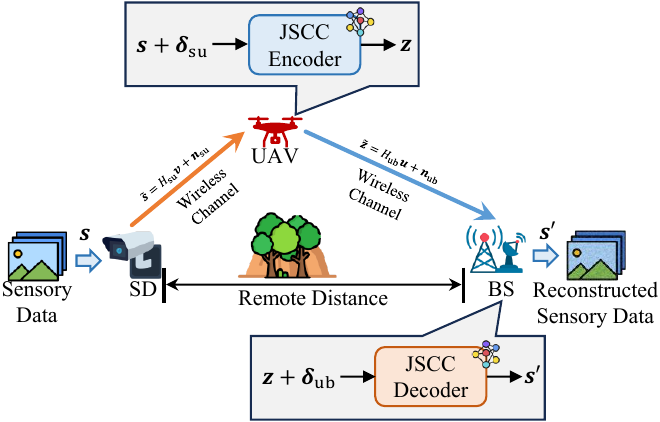}
        \caption{\small{System model of the UAV-aided SRC system.}}
        \label{fig:system_model}
    \end{figure}

    Suppose that the UAV hovers at a fixed height $\ell$ and its position is given by $\bm{u}=(x, y, \ell)$.
    We assume that the ground SD and BS are located on a two-dimensional plane with zero altitude. 
    Then, the 3D coordinates of SD, and BS are represented as $(x_\mathrm{sd}, y_\mathrm{sd}, 0)$ and $(x_\mathrm{bs}, y_\mathrm{bs}, 0)$.
    The transmission protocol of the considered UAV-assisted semantic relay system comprises the following two phases:
    \subsubsection{Sensory data transmission from SD to UAV} 
        Without the capability of performing heavy-workload DNN-based semantic encoding, the SD employs the uncoded analog transmission paradigm \cite{jakubczak2011cross} to send the source data $\bm{s}$ with dimension $D_\mathrm{su}$ to the UAV\footnote{
            Note that the non-differentiable nature of digital transmissions (e.g., quantization) severely hinders tractable end-to-end distortion analysis to facilitate UAV placement optimization. 
            We leave the extension to discrete digital architectures for future work.}
        , where the transmit signal of the SD is given by
        \vspace{-0.5em}
        \begin{align}
            \bm{v} = \sqrt{P_\mathrm{su}}\frac{\bm{s}}{\lVert\bm{s}\rVert}.
        \end{align}

        \vspace{-0.5em}
        \noindent
        Notably, $P_\mathrm{su}$ is the transmit power of the SD. 
        Suppose that the wireless channel from the SD to the UAV is dominated by the line-of-sight (LOS) link\cite{7572068}. 
        Accordingly, the channel gain follows the free-space path loss model, i.e., 
        \vspace{-0.5em}
        \begin{equation}
            H_\mathrm{su}(d_\mathrm{su}) = A (\frac{c}{4 \pi f_c d_\mathrm{su}})^\alpha,
        \end{equation}

        \vspace{-0.5em}
        \noindent
        where $A$ denotes the antenna gain, $f_c$ is the carrier frequency, $c$ is the speed of light, and $\alpha=2$ stands for the path loss exponent. 
        In addition, $d_\mathrm{su}=\sqrt{\left(x_\mathrm{sd}-x\right)^2+\left(y_\mathrm{sd}-y\right)^2+\ell^2}$ is the Euclidean distance between the SD and the UAV. 
        Then, the received signal $\tilde{\bm{s}} \in \mathbb{C}^{D_\mathrm{su}}$ at the UAV is given by
        \vspace{-0.5em}
        \begin{align}
            \tilde{\bm{s}} = \sqrt{H_\mathrm{su}(d_\mathrm{su})}e^{j\theta_\mathrm{su}} \bm{v} + \bm{n}_\mathrm{su},
        \end{align}

        \vspace{-0.5em}
        \noindent
        where $\bm{n}_\mathrm{su} \sim \mathcal{N}(0, B_\mathrm{su} N_0\bm{I})$ is the additive white Gaussian noise (AWGN) and $\theta_\mathrm{su}$ is the corresponding channel phase.
        $N_0$ and $B_\mathrm{su}$ respectively denote the noise power spectral density and bandwidth of the SD-UAV link.
        Then, the UAV post-processes the received signal through coherent detection, i.e., 
        \vspace{-0.5em}
        \begin{align}
            \hat{\bm{s}} = \frac{\tilde{\bm{s}} e^{-j\theta_\mathrm{su}} \lVert \bm{s} \rVert }{\sqrt{H_\mathrm{su}(d_\mathrm{su})P_\mathrm{su}}} = \bm{s} + \bm{\delta}_\mathrm{su},
            \label{eq:source_perturbation}
        \end{align}

        \vspace{-0.5em}
        \noindent
        where $\bm{\delta}_{\mathrm{su}} \sim \mathcal{N}(0, \sigma^2_\mathrm{su} \bm{I})$ is the Gaussian-distributed perturbation with  
        \vspace{-0.5em}
        \begin{align}
            \sigma_\mathrm{su}^2 = \frac{B_\mathrm{su} N_0 \lVert \bm{s} \rVert^2}{H_\mathrm{su}(d_\mathrm{su}) P_\mathrm{su}}.
            \label{eq:sigma_su}
        \end{align}

        \vspace{-0.5em}
        \noindent
        Accordingly, the UAV leverages its onboard JSCC encoder to extract semantic features $\bm{z} \in \mathbb{R}^{D_\mathrm{ub}}$ of the detected sensory data $\hat{\bm{s}}$, i.e., $\bm{z} = \text{E}_{\bm{\phi}} (\hat{\bm{s}})$, where $\text{E}_{\bm{\phi}}(\cdot) :: \mathbb{R}^{D_\mathrm{su}} \to \mathbb{R}^{D_\mathrm{ub}}$ is DNN-based semantic encoder parameterized by $\bm{\phi}$ and $D_\mathrm{ub}$ is the number of extracted semantic symbols.
            
    \subsubsection{Semantic feature transmission from UAV to BS}
    The UAV maps the extracted semantic features $\bm{z}$ onto a continuous constellation space and transmits these semantic symbols via analog transmission to the BS.
    The corresponding transmit signal of the UAV is
    \vspace{-0.5em}
    \begin{align}
        \bm{m} = \sqrt{P_\mathrm{ub}} \frac{\bm{z}}{\lVert \bm{z} \rVert},
    \end{align}

    \vspace{-0.5em}
    \noindent
    where $P_\mathrm{ub}$ is the transmit power of the UAV. 
    Then, the received signal at the BS is given by
    \vspace{-0.5em}
    \begin{align}
        \tilde{\bm{z}} = \sqrt{H_\mathrm{ub}(d_\mathrm{ub})}e^{j\theta_\mathrm{ub}} \bm{m} + \bm{n}_\mathrm{ub},
    \end{align}

    \vspace{-0.5em}
    \noindent
    where $\theta_\mathrm{ub}$ is the corresponding channel phase and the channel coefficient $H_\mathrm{ub}(d_\mathrm{ub})$ follows the free-space path loss model and is given by
    \vspace{-0.5em}
    \begin{align}
        H_\mathrm{ub}(d_\mathrm{ub}) = A (\frac{c}{4 \pi f_c d_\mathrm{ub}})^\alpha.
    \end{align}

    \vspace{-0.5em}
    \noindent
    $d_\mathrm{ub}=\sqrt{\left(x_\mathrm{bs}-x\right)^2+\left(y_\mathrm{bs}-y\right)^2+\ell^2}$ is the Euclidean distance between the UAV and the BS. 
    Moreover, $\bm{n}_\mathrm{ub} \sim \mathcal{N}(0, B_\mathrm{ub} N_0\bm{I})$ is the AWGN, where $B_\mathrm{ub}$ is the bandwidth of the UAV-BS link.
    Then, through the coherent detection, the BS recovers the semantic features as
    \vspace{-0.5em}
    \begin{align}
        \hat{\bm{z}} = \frac{\tilde{\bm{z}} e^{-j\theta_\mathrm{ub}} \lVert \bm{z} \rVert}{\sqrt{H_\mathrm{ub}(d_\mathrm{ub}) P_\mathrm{ub}}} = \bm{z} + \bm{\delta}_\mathrm{ub},
        \label{eq:semantic_perturbation}
    \end{align}

    \vspace{-0.5em}
    \noindent
    where $\bm{\delta}_{\mathrm{ub}} \sim \mathcal{N}(0, \sigma^2_\mathrm{ub} \bm{I})$ is the Gaussian random perturbation with 
    \vspace{-0.5em}
    \begin{align}
        \sigma_\mathrm{ub}^2 = \frac{B_\mathrm{ub} N_0 \lVert \bm{z} \rVert^2}{H_\mathrm{ub}(d_\mathrm{ub}) P_\mathrm{ub}}.
        \label{eq:sigma_ub}
    \end{align} 

    \vspace{-0.5em}
    \noindent
    Then, the BS reconstructs distorted source data $\bm{s}^{\prime} \in \mathbb{R}^{D_\mathrm{su}}$ via semantic decoding following 
    \vspace{-0.5em}
    \begin{align}
        \bm{s}^{\prime} = \text{D}_{\bm{\psi}} \left( \hat{\bm{z}} \right) = \mathrm{D}_{\bm{\psi}}\left( \mathrm{E}_{\bm{\phi}}(\bm{s} + \bm{\delta}_\mathrm{su}) + \bm{\delta}_\mathrm{ub} \right), 
        \label{eq:distorted_recon}
    \end{align}

    \vspace{-0.5em}
    \noindent
    where $\text{D}_{\bm{\psi}}(\cdot) :: \mathbb{R}^{D_\mathrm{ub}} \to \mathbb{R}^{D_\mathrm{su}}$ is the JSCC decoder parameterized by $\bm{\psi}$.

\subsection{Problem Formulation}
    The goal of the considered aerial semantic relay system is to reconstruct the sensory data at the BS with minimum distortion. 
    As indicated by \eqref{eq:source_perturbation} and \eqref{eq:semantic_perturbation}, the reconstruction error is critically dependent on both the noisy sensory data fed into the JSCC encoder and the corresponding semantic features fed into the JSCC decoder, both of which are perturbed by the wireless channel fading and communication noise. 
    Such wireless perturbations of the DNN-based JSCC codecs are governed by the UAV's position, which must be judiciously optimized to minimize the end-to-end distortion.
    
    In this paper, we quantify this distortion via the mean square error (MSE) \footnote{
        In this paper, we adopt MSE as the performance metric for analytical tractability. 
        Although other perceptual metrics such as PSNR and SSIM are widely used to evaluate reconstruction quality empirically, they are analytically intractable and thus not amenable to theoretical analysis.
        Notably, metrics like PSNR and SSIM are strongly correlated with the MSE. 
        Particularly, minimizing the MSE is equivalent to maximizing the PSNR.
    }between the reconstruction produced by noise-free JSCC codecs and the actual reconstruction $\bm{s}^\prime$ affected by the wireless perturbations at both the encoder and decoder inputs, i.e., 
    \begingroup
    \small
    \vspace{-0.5em}
    \begin{align}
        \mathcal{D}_s = \lVert \underbrace{\mathrm{D}_{\bm{\psi}}\left( \mathrm{E}_{\bm{\phi}}(\bm{s} + \bm{\delta}_\mathrm{su}) + \bm{\delta}_\mathrm{ub} \right)}_\text{With two-hop perturbations} -\underbrace{\mathrm{D}_{\bm{\psi}}\left( \mathrm{E}_{\bm{\phi}}(\bm{s}) \right)}_\text{Noise-free} \rVert_2^2. 
        \label{eq:total_distortion}
    \end{align}
    \endgroup

    \vspace{-0.5em}
    \noindent
    Accordingly, we aim to minimize the expected distortion of the considered aerial SRC system by optimizing the position of the UAV, i.e., 
    \begingroup 
    \small
    \vspace{-0.5em}
    \begin{align}
        \textbf{(P1):} \quad \min_{\bm{u}}& \quad \mathbb{E} \left[ \mathcal{D}_s \right], \\
        \text{s.t.}& \quad x \in \left[ x_\mathrm{sd}, x_\mathrm{bs} \right], y \in \left[ y_\mathrm{sd}, y_\mathrm{bs} \right],
    \end{align}
    \endgroup
    
    \vspace{-0.5em}
    \noindent
    where the expectation is taken with respect to input source data and random communication noise at both the SD-UAV and UAV-BS links.
    
    The challenges in solving Problem (P1) primarily stem from the difficulty of explicitly characterizing the impact of wireless channel fading and communication noise through the nonlinear encoder-decoder cascade. 

\section{Distortion Analysis and Closed-Form Solution Derivation}\label{proposed_method}
    To solve (P1), the key difficulty is that the objective $\mathbb{E}[D_s]$ is defined through the nonlinear encoder-decoder composition $\mathrm{D}_{\bm{\psi}}(\mathrm{E}_{\bm{\phi}}(\bm{s}+\bm{\delta}_\mathrm{su})+\bm{\delta}_\mathrm{ub})$, and thus does not directly reveal how the UAV position affects the end-to-end distortion. 
    Since the UAV position determines the two perturbation variances $\sigma_\mathrm{su}^2$ and $\sigma_\mathrm{ub}^2$ through the distances $d_\mathrm{su}$ and $d_\mathrm{ub}$, respectively, we first need an explicit distortion expression in terms of these two variances. 
    The following theorem provides such an analytical surrogate by applying a second-order Taylor expansion to the encoder-decoder cascade. 
    This surrogate will then serve as the basis for deriving the closed-form UAV placement policy.

    \begin{theorem}
        For the considered SRC system, the expected end-to-end reconstruction distortion is given by
        \vspace{-0.5em}
        \begin{align}
            \mathbb{E}\left[ \mathcal{D}_s \right] &= C_1\sigma_\mathrm{su}^4 + C_2 \sigma_\mathrm{ub}^4 + C_3\sigma_\mathrm{su}^2 + C_4\sigma_\mathrm{ub}^2\notag \\
            & + (C_5 + C_6)\sigma_\mathrm{su}^2 \sigma_\mathrm{ub}^2,
            \label{eq:proposition_1}
        \end{align}

        \vspace{-0.5em}
        \noindent
        where 
        \vspace{-0.5em}
        \begin{equation}
            \resizebox{0.75\columnwidth}{!}{$
            \displaystyle
            \left\{
            \begin{aligned}
            C_1 &= \frac{1}{4} \mathbb{E}_s \left[\sum^{D_{\mathrm{su}}}_{i=1} \mathrm{Tr}\left[ \bm{\mathcal{M}}^{(i)}_{s} \right]^2 + 2 \lVert \bm{\mathcal{M}}_{s} \rVert_{\mathrm{F}}^2\right], \\[1mm]
            C_2 &= \frac{1}{4} \mathbb{E}_s \left[\sum^{D_{\mathrm{su}}}_{i=1} \mathrm{Tr}\left[ \bm{H}_{s, \mathrm{D}}^{(i)} \left(\mathrm{E}_{\bm{\phi}}(\bm{s})\right) \right]^2 + 2\lVert \bm{H}_{s, \mathrm{D}} \left(\mathrm{E}_{\bm{\phi}}(\bm{s})\right) \rVert_{\mathrm{F}}^2\right], \\[1mm]
            C_3 &= \mathbb{E}_s \left[\mathrm{Tr} \left[ \left( \bm{J}_{s, \mathrm{D}} \left(\mathrm{E}_{\bm{\phi}}(\bm{s})\right) \bm{J}_{s, \mathrm{E}}(\bm{s}) \right)^\top \left(\bm{J}_{s, \mathrm{D}} \left(\mathrm{E}_{\bm{\phi}}(\bm{s})\right) \bm{J}_{s, \mathrm{E}}(\bm{s})\right) \right]\right], \\[1mm]
            C_4 &= \mathbb{E}_s \left[\mathrm{Tr} \left[ \bm{J}_{s, \mathrm{D}}^\top\left(\mathrm{E}_{\bm{\phi}}(\bm{s})\right) \bm{J}_{s, \mathrm{D}}\left(\mathrm{E}_{\bm{\phi}}(\bm{s})\right) \right]\right], \\[1mm]
            C_5 &= \mathbb{E}_s \left[\lVert \bm{J}_{s, \mathrm{E}}^\top(\bm{s}) \bm{H}_{s, \mathrm{D}}\left(\mathrm{E}_{\bm{\phi}}(\bm{s})\right) \rVert_\mathrm{F}^2\right], \\[1mm]
            C_6 &= \frac{1}{2} \mathbb{E}_s \left[\sum^{D_{\mathrm{su}}}_{i=1} \mathrm{Tr}\left[ \bm{\mathcal{M}}^{(i)}_{s}\right] \mathrm{Tr}\left[ \bm{H}_{s, \mathrm{D}}^{(i)} \left(\mathrm{E}_{\bm{\phi}}(\bm{s})\right) \right]\right].
            \end{aligned}
            \right.
            $}
            \label{eq:algebra_constants_1}
        \end{equation}

        \vspace{-0.5em}
        \noindent
        Notably, $\bm{J}_{s, \mathrm{E}}(\bm{s})$ and $\bm{J}_{s, \mathrm{D}}\left(\mathrm{E}_{\bm{\phi}}(\bm{s})\right)$ are the Jacobian matrices of encoder and decoder with respect to $\bm{s}$ and $\mathrm{E}_{\bm{\phi}}(\bm{s})$, respectively.
        Moreover, $\bm{H}_{s, \mathrm{E}}(\bm{s})$ and $\bm{H}_{s, \mathrm{D}}\left(\mathrm{E}_{\bm{\phi}}(\bm{s})\right)$ being the Hessian matrices of encoder and decoder with respect to $\bm{s}$ and $\mathrm{E}_{\bm{\phi}}(\bm{s})$, respectively.
        In addition, $\bm{\mathcal{M}}_s = \left[ \bm{\mathcal{M}}^{(1)}_s, \cdots, \bm{\mathcal{M}}^{(i)}_{s}, \cdots, \bm{\mathcal{M}}^{(D_\mathrm{su})}_s \right]$ and $\bm{\mathcal{M}}^{(i)}_{s} = \sum^{D_\mathrm{ub}}_{j=1} ( \bm{J}_{s, \mathrm{D}}^{(i, j)}\left(\mathrm{E}_{\bm{\phi}}(\bm{s})\right) \bm{H}_{s, \mathrm{E}}^{(j)} (\bm{s})) + \bm{J}_{s, \mathrm{E}}^\top(\bm{s}) \bm{H}_{s, \mathrm{D}}^{(i)}\left(\mathrm{E}_{\bm{\phi}}(\bm{s})\right) \bm{J}_{s, \mathrm{E}}(\bm{s})$. 
    \end{theorem}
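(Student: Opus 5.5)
We would prove the theorem by a second-order Taylor expansion of the cascade $\mathrm{D}_{\bm{\psi}}(\mathrm{E}_{\bm{\phi}}(\cdot))$ around the noise-free point, followed by Gaussian moment computations. First, expand the encoder around $\bm{s}$:
\begin{align}
\mathrm{E}_{\bm{\phi}}(\bm{s}+\bm{\delta}_\mathrm{su}) \approx \mathrm{E}_{\bm{\phi}}(\bm{s}) + \bm{J}_{s,\mathrm{E}}\bm{\delta}_\mathrm{su} + \tfrac{1}{2}\big[\bm{\delta}_\mathrm{su}^\top \bm{H}^{(j)}_{s,\mathrm{E}}\bm{\delta}_\mathrm{su}\big]_{j=1}^{D_\mathrm{ub}},\notag
\end{align}
so the total decoder-input perturbation is $\bm{\Delta} = \bm{J}_{s,\mathrm{E}}\bm{\delta}_\mathrm{su} + \bm{q}(\bm{\delta}_\mathrm{su}) + \bm{\delta}_\mathrm{ub}$, where $\bm{q}$ is the quadratic encoder term. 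Then expand the decoder around $\mathrm{E}_{\bm{\phi}}(\bm{s})$: the $i$-th output error is $\bm{J}^{(i,:)}_{s,\mathrm{D}}\bm{\Delta} + \tfrac12 \bm{\Delta}^\top \bm{H}^{(i)}_{s,\mathrm{D}}\bm{\Delta}$. We keep all terms up to second order jointly in $(\bm{\delta}_\mathrm{su},\bm{\delta}_\mathrm{ub})$ and collect them into three pieces.
\begin{itemize}
\item A first-order piece $\bm{J}_{s,\mathrm{D}}\bm{J}_{s,\mathrm{E}}\bm{\delta}_\mathrm{su} + \bm{J}_{s,\mathrm{D}}\bm{\delta}_\mathrm{ub}$.
\item A pure-source quadratic $\tfrac12\bm{\delta}_\mathrm{su}^\top\bm{\mathcal{M}}^{(i)}_s\bm{\delta}_\mathrm{su}$, which is exactly where $\bm{\mathcal{M}}^{(i)}_s$ arises: the decoder Jacobian times the encoder Hessian, plus the decoder Hessian sandwiched by the encoder Jacobian.
\item A pure-semantic quadratic $\tfrac12\bm{\delta}_\mathrm{ub}^\top\bm{H}^{(i)}_{s,\mathrm{D}}\bm{\delta}_\mathrm{ub}$ and a cross term $\bm{\delta}_\mathrm{su}^\top \bm{J}_{s,\mathrm{E}}^\top\bm{H}^{(i)}_{s,\mathrm{D}}\bm{\delta}_\mathrm{ub}$.
\end{itemize}

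Next we square the $i$-th component, sum over $i$, and take the expectation over the independent zero-mean Gaussians $\bm{\delta}_\mathrm{su}$ and $\bm{\delta}_\mathrm{ub}$ conditionally on $\bm{s}$. Odd moments vanish, so cross products between the linear and quadratic pieces drop out. The linear piece gives $\sigma_\mathrm{su}^2\mathrm{Tr}[(\bm{J}_{s,\mathrm{D}}\bm{J}_{s,\mathrm{E}})^\top\bm{J}_{s,\mathrm{D}}\bm{J}_{s,\mathrm{E}}] + \sigma_\mathrm{ub}^2\mathrm{Tr}[\bm{J}_{s,\mathrm{D}}^\top\bm{J}_{s,\mathrm{D}}]$, i.e.\ $C_3,C_4$. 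For the quadratic forms we use the Gaussian identity $\mathbb{E}[(\bm{x}^\top\bm{A}\bm{x})^2] = \sigma^4(\mathrm{Tr}[\bm{A}]^2 + 2\lVert\bm{A}\rVert_\mathrm{F}^2)$ for symmetric $\bm{A}$, which yields $C_1$ and $C_2$ with the factor $\tfrac14$. The cross product of the two quadratic forms factorizes by independence into $\tfrac14\cdot 2\,\sigma_\mathrm{su}^2\sigma_\mathrm{ub}^2\mathrm{Tr}[\bm{\mathcal{M}}^{(i)}_s]\mathrm{Tr}[\bm{H}^{(i)}_{s,\mathrm{D}}]$, giving $C_6$. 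The squared bilinear cross term gives $\sigma_\mathrm{su}^2\sigma_\mathrm{ub}^2\lVert\bm{J}_{s,\mathrm{E}}^\top\bm{H}^{(i)}_{s,\mathrm{D}}\rVert_\mathrm{F}^2$, summing to $C_5$. The expectation of the bilinear term against each quadratic form vanishes, since those products are odd in $\bm{\delta}_\mathrm{ub}$ or $\bm{\delta}_\mathrm{su}$. Finally we take $\mathbb{E}_s$ to obtain the stated coefficients.

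The main obstacle is bookkeeping at the truncation order. Squaring the second-order expansion produces cubic and quartic monomials, including linear-times-cubic terms, that are of the same order $\sigma^4$ as $C_1$, $C_2$ and $C_5$, $C_6$. We would state explicitly that the surrogate retains only products of the retained first- and second-order terms and neglects third-order Taylor remainders. We would also treat $\sigma_\mathrm{su}^2$ and $\sigma_\mathrm{ub}^2$ as deterministic given $\bm{s}$, regarding the $\lVert\bm{z}\rVert$ dependence as evaluated at the noise-free features. Care is also needed to symmetrize $\bm{\mathcal{M}}^{(i)}_s$ so the Frobenius identity applies.
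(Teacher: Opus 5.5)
Your proposal is correct and follows essentially the same route as the paper. Both use a second-order Taylor expansion of encoder and decoder and split the error into a linear part plus quadratic parts: the paper's $\bm{P}+\bm{Q}$, which you write explicitly as $\tfrac12\bm{\delta}_\mathrm{su}^\top\bm{\mathcal{M}}^{(i)}_s\bm{\delta}_\mathrm{su}$, the cross term $2\bm{U}$, and $\bm{N}$. Both then use vanishing odd moments and Isserlis-type Gaussian identities to obtain $C_1$ through $C_6$. Your explicit caveats are points the paper's proof passes over with ``$\approx$'' while the theorem states an equality: that linear-times-third-order terms also contribute at order $\sigma^4$ and are dropped, and that $\sigma_\mathrm{ub}^2$ is treated as independent of $\bm{\delta}_\mathrm{su}$.
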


    \begin{proof}
        We first apply Taylor expansion on $\mathrm{E}_{\bm{\phi}}(\bm{s} + \bm{\delta}_\mathrm{su})$ with respect to $\bm{s}$
        \begingroup
        \small
        \vspace{-0.5em}
        \begin{equation}
            \resizebox{1.0\linewidth}{!}{$
            \mathrm{E}_{\bm{\phi}}(\bm{s} + \bm{\delta}_\mathrm{su}) = \mathrm{E}_{\bm{\phi}}(\bm{s}) + \bm{J}_{s, \mathrm{E}}(\bm{s}) \bm{\delta}_\mathrm{su} + \frac{1}{2} \bm{\delta}^\top_\mathrm{su} \bm{H}_{s, \mathrm{E}} \bm{\delta}_\mathrm{su} + \mathcal{O}\left(\lVert \bm{\delta}_\mathrm{su} \rVert_2^3\right), 
            $}
        \end{equation}
        \endgroup

        \vspace{-0.5em}
        \noindent
        where the remainder contains higher-order terms that become negligible when the perturbation norm is sufficiently small.
        Thus, we leverage second-order approximation on $\mathrm{E}_{\bm{\phi}}(\bm{s} + \bm{\delta}_\mathrm{su})$, i.e., 
        \vspace{-0.5em}
        \begin{align}
            \mathrm{E}_{\bm{\phi}}(\bm{s} + \bm{\delta}_\mathrm{su}) \approx \mathrm{E}_{\bm{\phi}}(\bm{s}) + \bm{J}_{s, \mathrm{E}} \bm{\delta}_\mathrm{su} + \frac{1}{2} \bm{\delta}^\top_\mathrm{su} \bm{H}_{s, \mathrm{E}} \bm{\delta}_\mathrm{su}. 
            \label{eq:enc_so}
        \end{align}
        By substituting \eqref{eq:enc_so} into \eqref{eq:distorted_recon} and applying second-order Taylor expansion on the decoder, we have
        \vspace{-0.5em}
        \begingroup
        \small
        \begin{align}
            &\quad \mathrm{D}_{\bm{\psi}}\left( \mathrm{E}_{\bm{\phi}}(\bm{s} + \bm{\delta}_\mathrm{su}) + \bm{\delta}_\mathrm{ub} \right) - \mathrm{D}_{\bm{\psi}}\left( \mathrm{E}_{\bm{\phi}}(\bm{s}) \right) \notag \\
            &\approx \mathrm{D}_{\bm{\psi}}( \mathrm{E}_{\bm{\phi}}(\bm{s}) + \bm{J}_{s, \mathrm{E}} \bm{\delta}_\mathrm{su} + \frac{1}{2} \bm{\delta}^\top_\mathrm{su} \bm{H}_{s, \mathrm{E}} \bm{\delta}_\mathrm{su} + \bm{\delta}_\mathrm{ub}) - \mathrm{D}_{\bm{\psi}}\left( \mathrm{E}_{\bm{\phi}}(\bm{s}) \right), \notag \\
            &\approx \underbrace{\bm{J}_{s, \mathrm{D}} \bm{J}_{s, \mathrm{E}} \bm{\delta}_\mathrm{su} + \bm{J}_{s, \mathrm{D}} \bm{\delta}_\mathrm{ub}}_{\Delta s_1} \notag \\
            &\quad + \underbrace{
                \begin{aligned}
                    \frac{1}{2} \bigg( &\bm{J}_{s, \mathrm{D}} \bm{\delta}^\top_\mathrm{su} \bm{H}_{s, \mathrm{E}} \bm{\delta}_\mathrm{su} + \bm{\delta}_\mathrm{su}^\top \bm{J}_{s, \mathrm{E}}^\top \bm{H}_{s, \mathrm{D}} \bm{J}_{s, \mathrm{E}} \bm{\delta}_\mathrm{su} \\
                    &+ \bm{\delta}_\mathrm{su}^\top \bm{J}_{s, \mathrm{E}}^\top \bm{H}_{s, \mathrm{D}} \bm{\delta}_\mathrm{ub} + \bm{\delta}_\mathrm{ub}^\top \bm{H}_{s, \mathrm{D}} \bm{J}_{s, \mathrm{E}} \bm{\delta}_\mathrm{su} + \bm{\delta}_\mathrm{ub}^\top \bm{H}_{s, \mathrm{D}} \bm{\delta}_\mathrm{ub} \bigg)
                \end{aligned}
            }_{\Delta s_2}.
        \end{align}
        \endgroup

        \vspace{-0.5em}
        \noindent
        By defining $\Delta \bm{s} = \mathrm{D}_{\bm{\psi}}\left( \mathrm{E}_{\bm{\phi}}(\bm{s} + \bm{\delta}_\mathrm{su}) + \bm{\delta}_\mathrm{ub} \right) - \mathrm{D}_{\bm{\psi}}\left( \mathrm{E}_{\bm{\phi}}(\bm{s}) \right)$, the expected end-to-end distortion can be expressed as
        \vspace{-0.5em}
        \begin{align}
            \mathbb{E}\left[ \lVert \Delta \bm{s} \rVert_2^2 \right] \overset{(a)}{=} \mathbb{E}\left[ \Delta \bm{s}_1^\top \Delta \bm{s}_1 \right] + \mathbb{E}\left[ \Delta \bm{s}_2^\top \Delta \bm{s}_2 \right], 
            \label{eq:e2e_distortion}
        \end{align}

        \vspace{-0.5em}
        \noindent
        where (a) follows from the fact that the odd-order moments of a zero-mean Gaussian random variable are zero.

        Recall that $\bm{\delta}_{\mathrm{su}}$ and $\bm{\delta}_\mathrm{ub}$ are independent Gaussian variables.
        For $\mathbb{E} \left[ \Delta \bm{s}_1^\top \Delta \bm{s}_1 \right]$, we have 
        \vspace{-0.5em}
        \begin{equation}
            \resizebox{0.98\columnwidth}{!}{$
            \displaystyle
            \begin{aligned}
                &\quad \mathbb{E} \left[ \Delta \bm{s}_1^\top \Delta \bm{s}_1 \right] = \mathbb{E} \left[ \lVert \mathrm{D}_{\bm{\psi}}\left( \mathrm{E}_{\bm{\phi}}(\bm{s} + \bm{\delta}_\mathrm{su}) + \bm{\delta}_\mathrm{ub} \right) - \mathrm{D}_{\bm{\psi}}\left( \mathrm{E}_{\bm{\phi}}(\bm{s}) \right) \rVert_2^2 \right], \\
                &\approx \mathbb{E} \left[ \left( \bm{J}_{s, \mathrm{D}} \left( \bm{J}_{s, \mathrm{E}} \bm{\delta}_\mathrm{su} + \bm{\delta}_\mathrm{ub} \right) \right)^\top \left( \bm{J}_{s, \mathrm{D}} \left( \bm{J}_{s, \mathrm{E}} \bm{\delta}_\mathrm{su} + \bm{\delta}_\mathrm{ub} \right) \right)  \right], \\
                &=\mathbb{E}\left[ \bm{\delta}_\mathrm{su}^\top \left( \bm{J}_{s, \mathrm{D}} \bm{J}_{s, \mathrm{E}} \right)^\top \left(\bm{J}_{s, \mathrm{D}} \bm{J}_{s, \mathrm{E}}\right) \bm{\delta}_\mathrm{su} \right] + \mathbb{E}\left[ \bm{\delta}_\mathrm{ub}^\top \bm{J}_{s, \mathrm{D}}^\top \bm{J}_{s, \mathrm{D}} \bm{\delta}_\mathrm{ub} \right], \\
                &\overset{(b)}{=}\sigma_\mathrm{su}^2 \mathbb{E}\left[\mathrm{Tr}\left[ \left( \bm{J}_{s, \mathrm{D}} \bm{J}_{s, \mathrm{E}} \right)^\top \left(\bm{J}_{s, \mathrm{D}} \bm{J}_{s, \mathrm{E}}\right) \right] \right] + \sigma_\mathrm{ub}^2 \mathbb{E}\left[\mathrm{Tr}\left[ \bm{J}_{s, \mathrm{D}}^\top \bm{J}_{s, \mathrm{D}} \right]\right],
            \end{aligned}
            $}
            \label{eq:first_order_distortion}
        \end{equation}

        \vspace{-0.5em}
        \noindent
        where (b) follows Isserlis's Theorem. 

        In the following, we focus on deriving $\mathbb{E}\left[ \Delta \bm{s}_2^\top \Delta \bm{s}_2 \right]$. 
        We first define $\bm{P} = \frac{1}{2} \bm{J}_{s, \mathrm{D}} \bm{\delta}^\top_\mathrm{su} \bm{H}_{s, \mathrm{E}} \bm{\delta}_\mathrm{su}$, 
        $\bm{Q} = \frac{1}{2} \left( \bm{J}_{s, \mathrm{E}} \bm{\delta}_\mathrm{su} \right)^\top \bm{H}_{s, \mathrm{D}} \left(\bm{J}_{s, \mathrm{E}} \bm{\delta}_\mathrm{su}\right)$,
        $\bm{U} = \frac{1}{2} \left( \bm{J}_{s, \mathrm{E}} \bm{\delta}_\mathrm{su} \right)^\top \bm{H}_{s, \mathrm{D}} \bm{\delta}_\mathrm{ub}$, 
        $\bm{M} = \frac{1}{2} \bm{\delta}_\mathrm{ub}^\top \bm{H}_{s, \mathrm{D}} \left( \bm{J}_{s, \mathrm{E}} \bm{\delta}_\mathrm{su} \right)$, 
        and $\bm{N} = \frac{1}{2} \bm{\delta}_\mathrm{ub}^\top \bm{H}_{s, \mathrm{D}} \bm{\delta}_\mathrm{ub}$.

        According to the symmetric characteristics of Hessian matrix, we can have $\bm{U} = \bm{M}$.
        Then, we have 
        \vspace{-0.5em}
        \begin{equation}
            \resizebox{1\columnwidth}{!}{$
            \displaystyle
            \begin{aligned}
                &\mathbb{E}\left[ \Delta \bm{s}_2^\top \Delta \bm{s}_2 \right] = \mathbb{E}\left[\left( \bm{P} + \bm{Q} + 2\bm{U} + \bm{N} \right)^\top \left( \bm{P} + \bm{Q} + 2\bm{U} + \bm{N} \right) \right] \\
                &\overset{(c)}{=}\mathbb{E}_s\left[\left( \bm{P} + \bm{Q} \right)^\top \left( \bm{P} + \bm{Q} \right)\right] + 4\mathbb{E}\left[ \bm{U}^\top \bm{U}\right] + \mathbb{E}\left[ \bm{N}^\top \bm{N} \right] + 2 \mathbb{E}\left[\bm{P} + \bm{Q} \right] \mathbb{E}\left[ \bm{N} \right] \\
                &\overset{(d)}{=}\frac{\sigma_\mathrm{su}^4}{4} \mathbb{E}_s\left[ \sum^{D_\mathrm{su}}_{i=1} \mathrm{Tr}\left[ \bm{\mathcal{M}}^{(i)}_{s} \right]^2 + 2 \left\lVert \bm{\mathcal{M}}_{s} \right\rVert_\mathrm{F}^2 \right] + \frac{\sigma_\mathrm{ub}^4}{4} \mathbb{E}_s\left[ \sum^{D_\mathrm{su}}_{i=1} \mathrm{Tr}\left[ \bm{H}_{s, \mathrm{D}}^{(i)} \right]^2 + 2\left\lVert \bm{H}_{s, \mathrm{D}} \right\rVert_\mathrm{F}^2 \right] \\
                &\quad+ \sigma_\mathrm{su}^2 \sigma_\mathrm{ub}^2 \mathbb{E}_s\left[ \left\lVert \bm{J}_{s, \mathrm{E}}^\top \bm{H}_{s, \mathrm{D}} \right\rVert_\mathrm{F}^2 + \frac{1}{2} \sum^{D_\mathrm{su}}_{i=1} \mathrm{Tr}\left[ \bm{\mathcal{M}}^{(i)}_{s}\right] \mathrm{Tr}\left[ \bm{H}_{s, \mathrm{D}}^{(i)} \right] \right].
            \end{aligned}
            $}
            \label{eq:second_order_distortion}
        \end{equation}

        \vspace{-0.5em}
        \noindent        
        where (c) and (d) follow from the vanishing odd moments of zero-mean Gaussian variables and Isserlis's Theorem, respectively.
        By plugging \eqref{eq:first_order_distortion} and \eqref{eq:second_order_distortion} into \eqref{eq:e2e_distortion}, we can obtain \eqref{eq:proposition_1} in Theorem 1 and thus complete the proof. 
    \end{proof}

    From Theorem~1, we obtain the following observations:
    \begin{itemize}
        \item \textbf{Heterogeneous Sensitivity and Cross-Hop Coupling:}
        The two-hop perturbations have different effects on the end-to-end distortion due to the heterogeneous sensitivities of the semantic codecs. 
        Specifically, the impacts of the first-hop perturbation injected in the encoder input are governed by the cascaded Jacobian $\bm{J}_{s, \mathrm{D}}\bm{J}_{s, \mathrm{E}}$ and the mixed encoder-decoder curvature $\bm{\mathcal{M}}^{(i)}_s$, corresponding to coefficients $C_3$ and $C_1$, respectively.
        In contrast, the second-hop perturbation is injected directly before the decoder affects the end-to-end distortion through the decoder-only coefficients, i.e., the decoder Jacobian $\bm{J}_{s, \mathrm{D}}$ and Hessian $\bm{H}_{s, \mathrm{D}}$, corresponding to coefficients $C_4$ and $C_2$.
        More importantly, Theorem~1 further reveals that source and semantic perturbations do not trigger end-to-end distortion solely, but are coupled through $(C_5 + C_6)\sigma_\mathrm{su}^2 \sigma_\mathrm{ub}^2$.
        This coupling term characterizes how first-hop source perturbation propagates through the encoder, interacts with second-hop semantic perturbation through the decoder, and jointly shapes the end-to-end distortion.

        \item \textbf{The impact of wireless channel and radio resources on end-to-end distortion:}
        By plugging $\sigma_{\rm su}^2$ in \eqref{eq:sigma_su} and $\sigma_{\rm ub}^2$ \eqref{eq:sigma_ub} into Theorem~1, we can explicitly reveal how channel conditions and wireless resources affect the end-to-end distortion in UAV-aided SRC.
        Specifically, the distortion increases with the SD-UAV and UAV-BS propagation distances, i.e., $d_\mathrm{su}$ and $d_\mathrm{ub}$.
        However, simultaneously decreasing one of $d_\mathrm{su}$ and $d_\mathrm{ub}$ definitely increases the other one, indicating a trade-off of them to minimize distortion.
        In addition, the distortion also increases with the communication bandwidths $B_\mathrm{su}$ and $B_\mathrm{ub}$, since wider bandwidth introduces larger accumulated noise power.
        In contrast, increasing the transmit powers $P_\mathrm{su}$ and $P_\mathrm{ub}$ suppresses the corresponding perturbation variances and thus reduces the end-to-end distortion.
    \end{itemize}
    Accordingly, Theorem~1 indicates that \emph{UAV placement should account not only for wireless channel conditions, but also for codec-dependent cross-domain perturbation coupling.}
    
    In the following, we focus on deriving the optimal closed-form solution for UAV placement. 
    Without loss of generality, we set the SD as the origin, i.e., $x_\mathrm{sd}=y_\mathrm{sd}=0$. 
    Moreover, we establish a coordinate system by aligning the $x$-axis with the line segment from the SD to the BS, and denote their separation as $x_\mathrm{sb}$.
    In this regard, given the altitude, the optimal UAV placement reduces to a one-dimensional optimization problem along the $x$-axis. 
    Let $\hat{x}$ denote the horizontal position of the UAV along the $x$-axis in the new coordinate system. 
    The SD-UAV and UAV-BS distances are given by $d_\mathrm{su}=\sqrt{\hat{x}^2+\ell^2}$ and $d_\mathrm{ub}=\sqrt{(x_\mathrm{sb}-\hat{x})^2+\ell^2}$, respectively. 
    According to Theorem 1, Problem (P1) can be converted as
    \vspace{-0.5em}
    \begin{align}
        \text{(P2):} \quad \min_{\hat{x}} \ \mathbb{E}_s\left[ \lVert \Delta \bm{s} \rVert_2^2 \right], \quad \text{s.t.} \  \hat{x} \in \left[0, x_\mathrm{sb} \right], 
    \end{align}

    \vspace{-0.5em}
    \noindent
    where
    \begingroup
    \small
    \vspace{-0.5em}
    \begin{align}
        \mathbb{E}_s\left[ \lVert \Delta \bm{s} \rVert_2^2 \right] = \Gamma_1 d^4_\mathrm{su} + \Gamma_2 d^4_\mathrm{ub} + \Gamma_3 d_\mathrm{su}^2 + \Gamma_4 d_\mathrm{ub}^2 + \Gamma_5 d_\mathrm{su}^2 d_\mathrm{ub}^2.
        \label{eq:before_proposition_1}
    \end{align}
    \endgroup

    \vspace{-0.5em}
    \noindent
    Here
    \vspace{-0.5em}
    \begin{align}
        \begin{cases}
            T_{\mathrm{su}} = \frac{4 \pi f_c}{c} \sqrt{\frac{B_\mathrm{su} N_0 \lVert \bm{s} \rVert^2}{A P_\mathrm{su}}}, \quad T_{\mathrm{ub}} = \frac{4 \pi f_c}{c} \sqrt{\frac{B_\mathrm{ub} N_0 \lVert \bm{z} \rVert^2}{A P_\mathrm{ub}}}, \\
            \Gamma_1 =  \mathbb{E}_s\left[C_1 T_\mathrm{su}^4\right], \Gamma_2 = \mathbb{E}_s\left[C_2 T_\mathrm{ub}^4\right], \Gamma_3 = \mathbb{E}_s\left[C_3 T_\mathrm{su}^2\right], \\
            \Gamma_4 = \mathbb{E}_s\left[C_4 T_\mathrm{ub}^2\right], \quad \Gamma_5 = \mathbb{E}_s\left[(C_5 + C_6) T_\mathrm{su}^2  T_\mathrm{ub}^2\right]. \\
        \end{cases} \notag
    \end{align}
    
    \vspace{-0.5em}

    We derive the closed-form expression of the UAV placement in the considered aerial SRC system in the following. 

    \begin{proposition}
        The optimal UAV placement in Problem (P2) is the solution to
        the cubic equation 
        \vspace{-0.5em}
        \begin{align}
            a \hat{x}^3 + b \hat{x}^2 + c \hat{x} + d=0,
        \end{align}

        \vspace{-0.5em}
        \noindent   
        where $a = \Gamma_1 + \Gamma_2 + 4\Gamma_5$, $b = -3\left(\Gamma_2 + 2\Gamma_5\right)x_\mathrm{sb}$, $c = \Gamma_1 \ell^2 + \Gamma_2(3x_\mathrm{sb}^2 + \ell^2) + 2\left(\Gamma_3 + \Gamma_4 + \Gamma_5(x_\mathrm{sb}^2 + 2\ell^2)\right)$, and $d = -\left( \Gamma_2 (x_\mathrm{sb}^2 + \ell^2) + 2\Gamma_4 + 2\Gamma_5 \ell^2 \right)x_\mathrm{sb}$.
        By defining  $p = \frac{3ac-b^2}{3a^2}$, $p = \frac{2 b^3 - 9 a b c + 27 a^2 d}{27 a^3}$,  $\Delta=(\frac{q}{2})^2 + (\frac{p}{3})^3$ and  $\mathfrak{D}(\hat{x})=\mathbb{E}_s\left[ \lVert \Delta \bm{s} \rVert_2^2 \right]$, we consider the following three cases:
        \begin{itemize}
            \item If $\Delta > 0$, the optimal UAV placement in Problem (P2) is given by $\hat{x}^{\star} = \sqrt[3]{-\frac{q}{2} + \sqrt{\Delta}} + \sqrt[3]{-\frac{q}{2} - \sqrt{\Delta}} - \frac{b}{3a}.$
            %
            \item If $\Delta=0$, we have the following two subcases:
            \begin{itemize}
                \item If $p=q=0$, we have $\hat{x}^{\star} = -\frac{b}{3a}$.
                \item Otherwise,  $\hat{x}^{\star} = \arg \min(\mathfrak{D}(\hat{x}^{\star}_i), \mathfrak{D}(\hat{x}^{\star}_j))$, where $\hat{x}^{\star}_i = \frac{3q}{p} -\frac{b}{3a}$ and $\hat{x}^{\star}_j = -\frac{3q}{2p} -\frac{b}{3a}$.
            \end{itemize}
            %
            
            %
            \item If $\Delta<0$, we have $\hat{x}^{\star} = \arg \min(\mathfrak{D}(\hat{x}^{\star}_k))$, where $\hat{x}^{\star}_{k} = 2\sqrt{-\frac{p}{3}}\left(\frac{1}{3} \mathrm{arccos}\left( \frac{3q}{2p}\sqrt{-\frac{3}{p}} - \frac{2 \pi k}{3} \right) - \frac{b}{3a} \right), k=0, 1, 2$.
            %
        \end{itemize}
        %
    \end{proposition}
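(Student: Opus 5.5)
The plan is to reduce (P2) to a first-order optimality condition on a univariate polynomial and then solve that condition with the classical Cardano/trigonometric formulas. First, using the coordinate system introduced before (P2), I will substitute $d_\mathrm{su}^2=\hat{x}^2+\ell^2$ and $d_\mathrm{ub}^2=(x_\mathrm{sb}-\hat{x})^2+\ell^2$ into \eqref{eq:before_proposition_1}. This shows that $\mathfrak{D}(\hat{x})$ is a quartic polynomial in $\hat{x}$. Its leading coefficient is $\Gamma_1+\Gamma_2+\Gamma_5$, which I will argue is positive. $\Gamma_1,\dots,\Gamma_4$ are expectations of squared or trace-of-Gram quantities, hence nonnegative. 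For $\Gamma_5$ I will assume nonnegativity, or at least $\Gamma_1+\Gamma_2+\Gamma_5>0$, which holds in the regime where the surrogate of Theorem~1 is meaningful. Hence $\mathfrak{D}$ is coercive and smooth, and its minimizer over $[0,x_\mathrm{sb}]$ is either a stationary point or an endpoint.

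Second, I will compute $\mathfrak{D}'(\hat{x})$ term by term:
\begin{align}
\mathfrak{D}'(\hat{x}) &= 4\Gamma_1\hat{x}d_\mathrm{su}^2-4\Gamma_2(x_\mathrm{sb}-\hat{x})d_\mathrm{ub}^2+2\Gamma_3\hat{x}-2\Gamma_4(x_\mathrm{sb}-\hat{x}) \notag\\
&\quad +2\Gamma_5\left(\hat{x}d_\mathrm{ub}^2-(x_\mathrm{sb}-\hat{x})d_\mathrm{su}^2\right). \notag
\end{align}
Expanding in powers of $\hat{x}$ and normalizing by a common positive constant gives the cubic $a\hat{x}^3+b\hat{x}^2+c\hat{x}+d=0$. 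I will match the coefficients to the stated $a,b,c,d$, tracking carefully how the $\Gamma_5$ cross term contributes to each power. Any constant-factor discrepancy in the stated coefficients would be absorbed by this normalization. I will also check the sign pattern at the endpoints: $\mathfrak{D}'(0)\le 0$ and $\mathfrak{D}'(x_\mathrm{sb})\ge 0$ when the $\Gamma$'s are nonnegative. Thus at least one real root lies in $[0,x_\mathrm{sb}]$, and the constraint can be dropped once the best real stationary point is chosen.

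Third, I will solve the cubic by the Tschirnhaus substitution $\hat{x}=t-\frac{b}{3a}$, which yields the depressed cubic $t^3+pt+q=0$ with the stated $p$ and $q$ (the second "$p$" in the statement should read $q$). The three cases then follow from the discriminant $\Delta$. If $\Delta>0$, there is a single real root, given by Cardano's formula; since a minimizer must be stationary, it is $\hat{x}^\star$. If $\Delta=0$ and $p=q=0$, there is a triple root $t=0$. If $\Delta=0$ otherwise, the roots are the simple root $3q/p$ and the double root $-3q/(2p)$, and I compare $\mathfrak{D}$ at both. If $\Delta<0$, there are three real roots, given by the trigonometric form $t_k=2\sqrt{-p/3}\cos\!\left(\frac13\arccos\!\left(\frac{3q}{2p}\sqrt{-3/p}\right)-\frac{2\pi k}{3}\right)$, and I select the one minimizing $\mathfrak{D}$. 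In this last case the statement's placement of the cosine and of the shift $-b/(3a)$ needs to be read accordingly.

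I expect the main obstacle to be the bookkeeping in the second step, namely expanding the $\Gamma_5 d_\mathrm{su}^2d_\mathrm{ub}^2$ term and reconciling the resulting coefficients exactly with the stated $a,b,c,d$. The root formulas are standard. The remaining subtle point is justifying that the chosen real root lies in $[0,x_\mathrm{sb}]$, for which I will rely on the endpoint sign argument above.
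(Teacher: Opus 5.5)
Your route is the paper's: differentiate the quartic surrogate \eqref{eq:before_proposition_1}, set $\mathfrak{D}'=0$, and solve by Cardano. Your side remarks are sound: the second ``$p$'' should be $q$, and the $\Delta<0$ formula needs the cosine restored and the shift $-\frac{b}{3a}$ moved outside. The gap is the step you defer, namely matching your derivative to the stated $a,b,c,d$ ``up to a common positive constant.'' Expanding your (correct) expression for $\mathfrak{D}'$ gives
\begin{align*}
\tfrac{1}{2}\mathfrak{D}'(\hat{x}) &= 2(\Gamma_1+\Gamma_2+\Gamma_5)\hat{x}^3-3(2\Gamma_2+\Gamma_5)x_\mathrm{sb}\hat{x}^2\\
&\quad+\big(2\Gamma_1\ell^2+2\Gamma_2(3x_\mathrm{sb}^2+\ell^2)+\Gamma_3+\Gamma_4+\Gamma_5(x_\mathrm{sb}^2+2\ell^2)\big)\hat{x}\\
&\quad-\big(2\Gamma_2(x_\mathrm{sb}^2+\ell^2)+\Gamma_4+\Gamma_5\ell^2\big)x_\mathrm{sb}.
\end{align*}
This is not proportional to the stated cubic. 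Relative to the $\Gamma_3,\Gamma_4,\Gamma_5$ contributions, the $\Gamma_1,\Gamma_2$ contributions in the statement are too small by a factor of $4$. In fact the stated cubic is exactly the derivative of $\frac14\Gamma_1 d_\mathrm{su}^4+\frac14\Gamma_2 d_\mathrm{ub}^4+\Gamma_3 d_\mathrm{su}^2+\Gamma_4 d_\mathrm{ub}^2+\Gamma_5 d_\mathrm{su}^2 d_\mathrm{ub}^2$.

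Your own first step already exposes this. The leading coefficient of $\mathfrak{D}$ is $\Gamma_1+\Gamma_2+\Gamma_5$, so $a$ must be a multiple of that, not of $\Gamma_1+\Gamma_2+4\Gamma_5$. Concretely, take only $\Gamma_1=\Gamma_4=1$ and $\ell=x_\mathrm{sb}=1$. The true stationarity condition is $2\hat{x}^3+3\hat{x}-1=0$ (root $\approx 0.31$), whereas the stated one is $\hat{x}^3+3\hat{x}-2=0$ (root $\approx 0.60$). No normalization rescues the statement as written, because its roots are not the stationary points of $\mathfrak{D}$.

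The paper's proof has the same defect: it asserts that $\partial\mathfrak{D}/\partial\hat{x}$ equals the cubic with the stated coefficients, without carrying out the expansion. What you can prove is the corrected proposition, with $(a,b,c,d)$ replaced by the coefficients displayed above and $p,q,\Delta$ recomputed from them. Your Tschirnhaus substitution and discriminant case analysis then go through unchanged.

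A smaller point concerns dropping the constraint. The signs $\mathfrak{D}'(0)\le 0\le\mathfrak{D}'(x_\mathrm{sb})$ only guarantee a stationary point inside $[0,x_\mathrm{sb}]$. To drop the constraint and take the best real root, you also need $\mathfrak{D}$ to be larger outside the interval. For nonnegative $\Gamma$'s this holds because, outside $[0,x_\mathrm{sb}]$, both $d_\mathrm{su}$ and $d_\mathrm{ub}$ exceed their values at the nearer endpoint.
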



        
    \begin{proof}
        According to \eqref{eq:before_proposition_1}, by calculating the partial derivative with respect to $\hat{x}$, we have 
        \vspace{-0.5em}
        \begingroup
        \small
        \begin{align}
            &\frac{\partial \mathbb{E}\left[ \lVert \Delta \bm{s} \rVert_2^2 \right]}{\partial \hat{x}} = \left( \Gamma_1 + \Gamma_2 + 4\Gamma_5 \right) \hat{x}^3 -3 \left[ \Gamma_2 + 2\Gamma_5 \right] x_\mathrm{sb} \hat{x}^2\notag \\
            &+ \left( \Gamma_1 \ell^2 + \Gamma_2(3x_\mathrm{sb}^2 + \ell^2) + 2\left(\Gamma_3 + \Gamma_4 + \Gamma_5(x_\mathrm{sb}^2 + 2\ell^2)\right) \right) \hat{x} \notag \\
            &- \left( \Gamma_2 (x_\mathrm{sb}^2 + \ell^2) + 2\Gamma_4 + 2\Gamma_5 \ell^2 \right)x_\mathrm{sb}.
        \end{align}
        \endgroup

        \vspace{-0.5em}
        \noindent
        For brevity, we define $\frac{\partial \mathbb{E}\left[ \lVert \Delta \bm{s} \rVert_2^2 \right]}{\partial \hat{x}} = a \hat{x}^3 + b \hat{x}^2 + c \hat{x} + d$. 
        Letting $\frac{\partial \mathbb{E}\left[ \lVert \Delta \bm{s} \rVert_2^2 \right]}{\partial \hat{x}} = 0$, according to Cardano's Formula we can derive the three cases stated in Proposition 1. 
    \end{proof}

    According to the analytical expression derived in Proposition 1, one can directly obtain the optimal UAV placement with low complexity.

\section{Simulation Results}\label{simulation_results}
    In this section, we evaluate the performance of the proposed UAV placement method in aerial SRC systems.


\subsection{Simulation Settings}
    \subsubsection{Dataset and System Setups}
        Following \cite{10829532, 7572068}, we set carrier frequency $f_c$, communication bandwidth $B_\mathrm{su}$ and $B_\mathrm{ub}$, antenna gain $A$, and noise power spectral density $N_0$ to 5.8 GHz, 20 M, 0.5, and -174 dBm, respectively. 
        Moreover, we set the transmit power of SD and UAV to 200 mW and 50 mW, respectively. 
        In addition, we set UAV's altitude $\ell$ to 120 m, while the distance between SD and BS $x_\mathrm{sb}$ is set to 1000 m. 
        We evaluate two representative image datasets, i.e., MNIST ($D_\mathrm{su}=784, D_\mathrm{ub}=98$) and FashionMNIST ($D_\mathrm{su}=784, D_\mathrm{ub}=128$) utilizing DeepJSCC \cite{8723589}, both of which are pre-trained over the AWGN channel with SNRs ranging from 1 dB to 16 dB.
        The coefficients in \eqref{eq:algebra_constants_1} related to Jacobian and Hessian matrices are measured over the corresponding training dataset.
        The average performance across 100 independent trials is reported for all considered methods.
    \subsubsection{Benchmarks}
        For performance comparison, we consider the following benchmarks:
        \begin{itemize}
            \item \textbf{Exhaustive Search}: 
            This method evaluates the empirical distortion at every candidate UAV deployment position and selects the one with the minimum distortion.
            \item \textbf{CF-SQP}~\cite{11006481}: 
            This method follows a curve-fitting strategy, where the measured distortion values at sampled UAV positions are fitted by a standard quadratic polynomial.
            For fairness, CF-SQP constructs the fitting curve using the same amount of training data as our proposed method when estimating  the coefficients in \eqref{eq:algebra_constants_1}.

            \item \textbf{Shannon-Capacity-based Deployment}~\cite{10638143}: 
            This method follows the max-min Shannon capacity criterion for UAV placement.
            Specifically, the UAV position is selected by maximizing the minimum transmission capacity between the SD-UAV and UAV-BS links.

            \item \textbf{Central Placement}: 
            This baseline places the UAV at the midpoint between the SD and BS, serving as a simple geometry-based strategy without considering channel conditions or codec sensitivity.
        \end{itemize}

\subsection{Performance Evaluation}
    As illustrated in Fig.~\ref{fig:simulation_results_1}, we plot the end-to-end reconstruction MSE as a function of the distance $d_\mathrm{su}$ between the SD and the UAV under two representative datasets.
    We observe that the distortion first decreases with $d_\mathrm{su}$ due to the improved channel condition of the UAV-BS link for semantic feature transmission. 
    Beyond a certain point, the reconstruction MSE increases with $d_\mathrm{su}$ because of a larger perturbation at the input of the JSCC encoder deployed on the UAV. 
    This validates the existence of the optimal UAV placement for end-to-end distortion minimization in aerial SRC systems. 
    In addition, the optimal UAV position following the derived Theorem~1 shows close proximity to that under exhaustive search method.
    Specifically, compared to the CF-SQP, our proposed approach achieves $77.30\%$ and $79.49\%$ closer distance to the optimum obtained by the exhaustive search on the MNIST and FashionMNIST datasets, respectively.
    This validates the effectiveness of the derived closed-form distortion in Theorem 1.

    \begin{figure}[!t]
        \centering
        \includegraphics[scale=0.451]{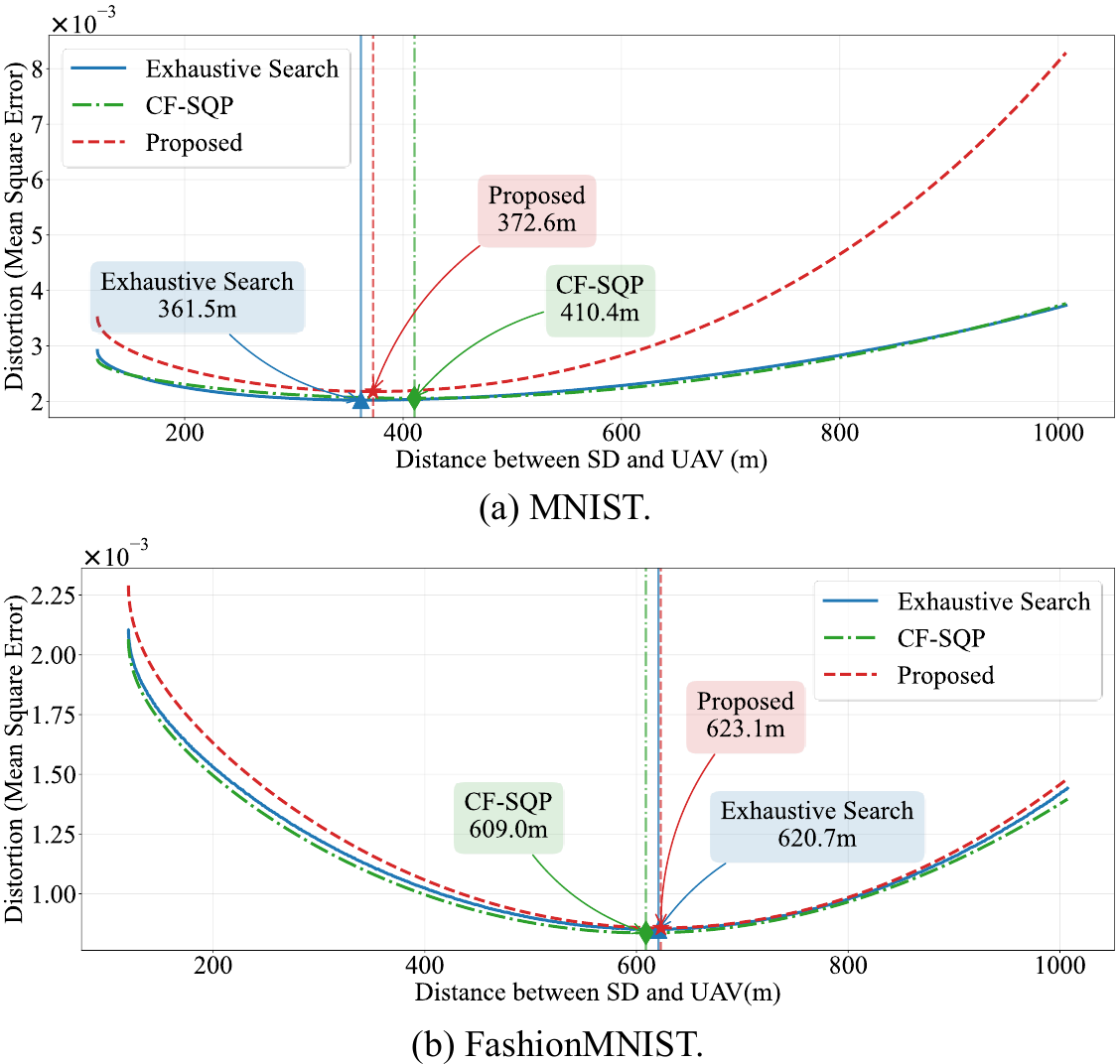}
        \caption{\small{End-to-end distortion of the UAV-aided SRC system versus the UAV's horizontal coordinate on the MNIST and FashionMNIST datasets. 
        Optimal UAV placements determined by different strategies are indicated by markers and dashed lines.}}
        \label{fig:simulation_results_1}
    \end{figure}


    In Fig.~\ref{fig:simulation_results_2}, we evaluate the achieved distortion and the optimized SD-UAV distance under a wide range of UAV transmit powers on the MNIST dataset.
    As shown in Fig.~\ref{fig:simulation_results_2}(a), the proposed method attains end-to-end distortion performance close to that of the exhaustive-search benchmark and consistently outperforms the other three baselines, i.e., achieving $52.28\%$, $52.32\%$ and $9.27\%$ lower distortion compared to central placement, Shannon-capacity-based approach, and CF-SQP methods, respectively, when the transmit power of the UAV is 0.2W. 
    This suggests the benefits of theoretically characterizing the end-to-end distortion and the closed-form UAV placement solution in the aerial SRC system.
    Fig.~\ref{fig:simulation_results_2}(b) further shows that the minimum-distortion UAV position gradually shifts toward the SD as the UAV transmit power increases, since a stronger UAV-BS link allows the UAV to move closer to the SD to mitigate source perturbation.
    Although CF-SQP can occasionally achieve accurate placement (e.g., when $P_\mathrm{ub}$ is 0.07W), its selected position shifts toward the SD more aggressively and deviates from the exhaustive-search solution in most cases, indicating the instability of empirical curve fitting method under varying transmit powers.
    In contrast, the proposed method demonstrates robustness across various UAV transmit power levels, owing to the derived analytical expression that explicitly captures the dependence of the reconstruction distortion and the resultant UAV placement on the wireless channel conditions and radio resource allocation.
    It is also worth noting that CF-SQP requires re-fitting with newly collected samples whenever the UAV transmit power changes, whereas the proposed method only needs to measure the coefficients in \eqref{eq:algebra_constants_1} related to Jacobian and Hessian matrices of the semantic codecs, showcasing the strong generalizability and fast adaptability of the proposed method across various aerial SRC system configurations.

    \begin{figure}[!t]
        \centering
        \includegraphics[scale=0.333]{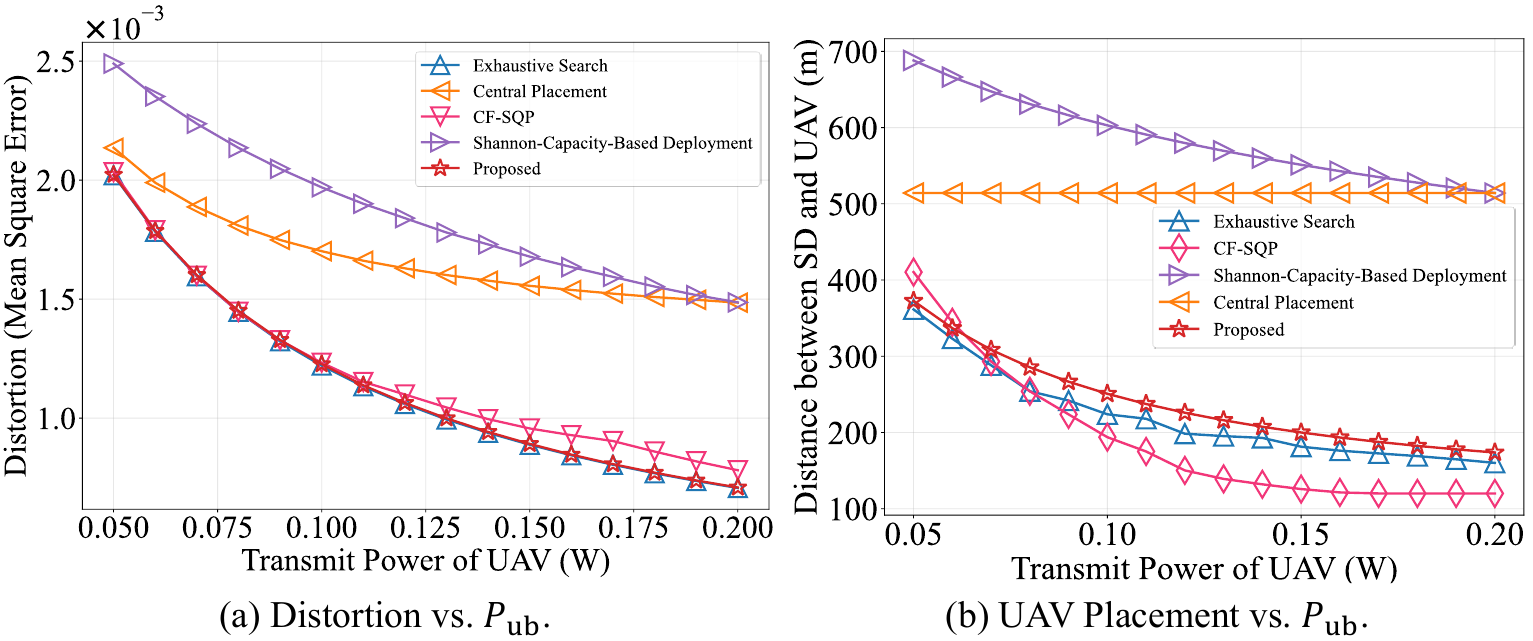}
        \caption{\small{End-to-end distortion and UAV placement determined by various methods versus the UAV's transmit power on MNIST.}}
        \label{fig:simulation_results_2}
    \end{figure}

\section{Conclusions}\label{conclusion}
    In this paper, we proposed an interpretable and low-complexity UAV placement policy by minimizing the end-to-end reconstruction distortion in aerial SRC systems under coupled two-hop wireless perturbations. We derived an analytical expression of the end-to-end distortion, explicitly characterizing how the cross-hop perturbation coupling, wireless channel conditions, and radio resource affect the reconstruction performance of the semantic codecs.
    Accordingly, we developed a closed-form UAV placement strategy with fast adaptability to different aerial SRC configurations.
    Numerical results demonstrated that the proposed distortion-aware deployment closely tracks the empirical exhaustive-search optimum and achieves lower reconstruction error than representative capacity-based and curve-fitting benchmarks.

\bibliography{ref}    
\bibliographystyle{ieeetr}
\end{document}